\newcommand{\SarielComp}[1]{}
\newcommand{\NotSarielComp}[1]{#1}%
\newcommand{\SarielComp}[1]{#1}%
\newcommand{\NotSarielComp}[1]{}%
\newcommand{\IfPrinterVer}[2]{#2}%
\definecolor{blue25}{rgb}{0,0,0}
\providecommand{\emphic}[2]{%
   \textcolor{blue25}{%
      \textbf{\emph{#1}}}%
   \index{#2}}
\providecommand{\emphi}[1]{}%
\renewcommand{\emphi}[1]{\emphic{#1}{#1}}
\newcommand{\hrefb}[3][black]{\href{#2}{\color{#1}{#3}}}%
\newlist{compactenumA}{enumerate}{5}%
\setlist[compactenumA]{topsep=0pt,itemsep=-1ex,partopsep=1ex,parsep=1ex,%
   label=(\Alph*)}%
\theoremstyle{plain}%
\newtheorem{theorem}{Theorem}[section]
\newtheorem{lemma}[theorem]{Lemma}
\newtheorem{observation}[theorem]{Observation}
\theoremstyle{plain}%
\newtheorem*{remark:unnumbered}[theorem]{Remark}%
\newtheorem{defn}[theorem]{Definition}
\newcommand{\myqedsymbol}{\rule{2mm}{2mm}}
\theoremstyle{nonumberplain}%
\newtheorem{proof}{Proof:}%
\newcommand{\atgen}{\symbol{'100}}
\newcommand{\SarielThanks}[1]{\thanks{Department of Computer Science;
      University of Illinois; 201 N. Goodwin Avenue; Urbana, IL,
      61801, USA; {\tt sariel\atgen{}illinois.edu}; {\tt
         \url{http://sarielhp.org/}.} #1}}
\newcommand{\AmirThanks}[1]{%
   \thanks{%
      School of Electrical Engineering and Computer Science; %
      Oregon State University;
      \href{mailto:nayyeria@eecs.oregonstate.edu}
      {{nayyeria@eecs.oregonstate.edu}}. #1
   }%
}
\newcommand{\HLink}[2]{\hyperref[#2]{#1~\ref*{#2}}}
\newcommand{\HLinkSuffix}[3]{\hyperref[#2]{#1\ref*{#2}{#3}}}
\newcommand{\figlab}[1]{\label{fig:#1}}
\newcommand{\figref}[1]{\HLink{Figure}{fig:#1}}
\newcommand{\thmlab}[1]{{\label{theo:#1}}}
\newcommand{\thmref}[1]{\HLink{Theorem}{theo:#1}}
\newcommand{\apndlab}[1]{\label{apnd:#1}}
\newcommand{\itemlab}[1]{\label{item:#1}}
\newcommand{\itemref}[1]{\HLinkSuffix{}{item:#1}{}}
\newcommand{\lemlab}[1]{\label{lemma:#1}}
\newcommand{\lemref}[1]{\HLink{Lemma}{lemma:#1}}%
\newcommand{\seclab}[1]{\label{sec:#1}}
\newcommand{\secref}[1]{\HLink{Section}{sec:#1}}
\providecommand{\eqlab}[1]{}%
\renewcommand{\eqlab}[1]{\label{equation:#1}}
\newcommand{\remove}[1]{}%
\newcommand{\pth}[2][\!]{\mleft({#2}\mright)}%
\newcommand{\ceil}[1]{\left\lceil {#1} \right\rceil}
\newcommand{\floor}[1]{\left\lfloor {#1} \right\rfloor}
\newcommand{\brc}[1]{\left\{ {#1} \right\}}
\newcommand{\cardin}[1]{\left| {#1} \right|}%
\renewcommand{\th}{th\xspace}
\providecommand{\Mh}[1]{#1}%
\newcommand{\lenX}[1]{\cardin{#1}}%
\newcommand{\etal}{\textit{et~al.}\xspace}
\renewcommand{\Re}{\mathbb{R}}%
\newcommand{\Term}[1]{\textsf{#1}}
\newcommand{\BFS}{\Term{BFS}\index{BFS}\xspace}
\newcommand{\DFS}{\Term{BFS}\index{DFS}\xspace}
\numberwithin{figure}{section}%
\numberwithin{table}{section}%
\numberwithin{equation}{section}%
\newlength{\savedparindent}
\newcommand{\SaveIndent}{\setlength{\savedparindent}{\parindent}}
\newcommand{\RestoreIndent}{\setlength{\parindent}{\savedparindent}}
\newlength{\ppicX}
\newlength{\ppicY}
\newcommand{\ParaWFig}[3]{%
   \settowidth{\ppicX}{#2}  
   \setlength{\ppicY}{\linewidth}
   \addtolength{\ppicY}{-\ppicX}
   \addtolength{\ppicY}{-0.30cm}
   \vspace{-0.2cm}%
   \SaveIndent%
   \noindent\makebox[\ppicY]{\begin{minipage}[t]{\ppicY}%
       \vspace{0pt}%
       \RestoreIndent%
       #1%
   \end{minipage}}%
   \hfill %
       \makebox[0.975\ppicX]{\begin{minipage}[t]{0.999\ppicX}%
       \vspace{0pt} %
       \hfill%
       #2

       #3
       \vfill
   \end{minipage}}
   \smallskip
}
\newcommand{\IntRange}[1]{\left\llbracket #1 \right\rrbracket}
\providecommand{\BibLatexMode}[1]{}
\providecommand{\BibTexMode}[1]{#1}
  \renewcommand{\BibLatexMode}[1]{}
  \renewcommand{\BibTexMode}[1]{#1}
  \renewcommand{\BibLatexMode}[1]{#1}
  \renewcommand{\BibTexMode}[1]{}
\providecommand{\Mh}[1]{#1}%
\newcommand{\Tree}{\Mh{\mathsf{T}}}%
\newcommand{\Root}{\Mh{\mathsf{r}}}%
\newcommand{\Graph}{\Mh{G}}%
\newcommand{\tri}{\Mh{\eta}}%
\newcommand{\rdistX}[1]{\Mh{\ell}\pth{#1}}%
\newcommand{\VS}{\Mh{V}}%
\newcommand{\ES}{\Mh{E}}%
\newcommand{\FS}{\Mh{F}}%
\newcommand{\PathTY}[2]{\Mh{\pi}\pth{#1, #2}}%
\newcommand{\EX}[1]{\Mh{\ES}\pth{#1}}%
\newcommand{\VX}[1]{\Mh{\VS}\pth{#1}}%
\newcommand{\FX}[1]{\Mh{F}\pth{#1}}%
\newcommand{\Deep}{\Mh{\psi}}%
\newcommand{\PX}[1]{\Mh{P}_{\leq {#1}}}%
\newcommand{\Cycle}{\Mh{C}}
\newcommand{\CycleA}{\Mh{D}}%
\newcommand{\CycleB}{\Mh{J}}%
\newcommand{\CycleC}{\Mh{K}}%
\newcommand{\gCycle}{\Mh{{\xi}}}
\newcommand{\CS}{\Mh{S}}%
\newcommand{\fN}{\Mh{\mathsf{f}}}%
\newcommand{\inX}[1]{\Mh{\mathrm{in}}\pth{#1}}%
\newcommand{\outX}[1]{\Mh{\mathrm{out}}\pth{#1}}%
\newcommand{\ts}{\hspace{0.6pt}}
\newcommand{\concat}{\ts\Mh{\circ}\ts}%
\author{%
   Sariel Har-Peled%
   \SarielThanks{}%
   \and%
   Amir Nayyeri%
   \AmirThanks{}%
}%
\begin{document}

\title{A Simple Algorithm for Computing a Cycle Separator}%
\date{\today}%

\maketitle
\begin{abstract}
    We present a linear time algorithm for computing a cycle separator
    in a planar graph that is (arguably) simpler than previously known
    algorithms.  Our algorithm builds on, and is somewhat similar to,
    previous algorithms for computing separators. In particular, the
    algorithm described by Klein and Mozes \cite{km-oapg-17} is quite
    similar to ours.  The main new ingredient is a specific layered
    decomposition of the planar graph constructed differently from
    previous \BFS-based layerings.
\end{abstract}

\section{Introduction}
The planar separator theorem is a fundamental result in the study of
planar graphs that has been used in many divide and conquer
algorithms.  The theorem guarantees for planar graphs the existence of
$O(\sqrt{n})$ vertices whose removal breaks the graph into ``small''
pieces, connected components of size at most $\alpha n$ for a constant
$\alpha$.  For triangulated planar graphs, a stronger result is known
-- the separator is a simple cycle of length $O(\sqrt{n})$ whose
inside and outside (in the planar embedding) each contains at most
$\alpha n$ vertices.

The separator theorem was first proved by Ungar~\cite{u-tpg-51} with a
slightly weaker upper bound of $O(\sqrt{n}\log n)$.  Lipton and
Tarjan~\cite{lt-stpg-79} showed how to compute, in linear time, a
separator of size $O(\sqrt{n})$. Later, Miller~\cite{m-fsscs-86}
described a linear time algorithm for computing a cycle separator.

In this paper, we describe a simple algorithm for computing a cycle
separator.  We believe the simplicity of our algorithms is comparable
to that of the original algorithm of Lipton and Tarjan
\cite{lt-stpg-79}.

\paragraph{Existential proofs.}
 Alon \etal. \cite{ast-ps-94} described an
existential proof of the cycle separator theorem using a maximality
condition.

Miller \etal \cite{mttv-sspnng-97} showed how to compute a planar
separator in a planar graph if its circle packing realization is given
(this proof was later simplified by Har-Peled~\cite{h-speps-13}). In
particular, the planar separator theorem is an easy consequence of the
work of Paul Koebe \cite{k-kdka-36} (see \cite{h-speps-13} for
details).  A nice property of the proof of Miller \etal
\cite{mttv-sspnng-97}, is that it immediately implies the cycle
separator theorem. Unfortunately, there is no finite algorithm for
computing the circle packing realization of a planar graph -- all
known algorithms are iterative convergence algorithms. That is, the
proof of Miller \etal is an existential proof.

\paragraph{Constructive proofs.}
As mentioned above, Miller~\cite{m-fsscs-86} gave a linear time
algorithm for computing the cycle separator. A somewhat different
algorithm is also provided in the work of Klein \etal
\cite{kms-srsdp-13}, which computes the whole hierarchy of such
separators in linear time.  Fox-Epstein \etal \cite{fmps-sscsp-16}
also provides an algorithm for computing a cycle separator in linear
time.

\paragraph{This paper.}

A simple cycle is a \emph{$\alpha$-separator} if its inside and
outside each contains at most $\ceil{\alpha \fN}$ faces, where $\fN$
is the number of faces of the graph.  We present a linear time
algorithm for computing a cycle $2/3$-separator -- see \thmref{sep}.
The algorithm is somewhat similar in spirit to the work of Fox-Epstein
\etal \cite{fmps-sscsp-16}. A closer algorithm to ours is described by
Klein and Mozes \cite[Section 5.9]{km-oapg-17}. The new algorithm is
(arguably) slightly simpler than these previous versions.

The rest of the paper is composed of two section. In \secref{prelim}
we define some required basic concepts, and in \secref{proofs} we
describe the new algorithm.

\section{Preliminaries}
\seclab{prelim}

Let $\Graph$ be a triangulated planar graph embedded in the plane,
with vertex set $\VS$, edge set $\ES$, and face set $\FS$,
and let $\Graph^* = (\VS^*, \ES^*,\FS^*)$ be the
\emphi{dual} of $\Graph$.  A vertex $x\in \VS$ corresponds to a
face $x^*\in \FS^*$, an edge $xy \in \ES$ to an edge
$\pth{xy}^* \in \ES^*$, and a face $xyz\in \FS$ to a vertex
$(xyz)^* \in \VS^*$.  Because of the last correspondence, and
since $\Graph$ is triangulated, $\Graph^*$ is $3$-regular: all its
vertices have degree three.  For any spanning tree
$\Tree = (\ES_\Tree, \VS_\Tree)$ of $\Graph$, the duals of
the edges $\ES \setminus \ES_\Tree$ form a spanning tree of the
dual graph $\Graph^*$.

For any simple cycle $\Cycle$ in the embedding of $\Graph$, the
\emphi{inside} (resp., \emphi{outside}) of $\Cycle$, denoted by
$\inX{\Cycle}$ (resp., $\outX{\Cycle}$), is the bounded (resp.,
unbounded) region of $\Re^2 \setminus \Cycle$. Each vertex of $\VS$ is
inside, outside or on $\Cycle$.  A face is \emphi{inside}
(resp. \emphi{outside}) $\Cycle$ if its interior is a subset of
$\inX{\Cycle}$ (resp.  $\outX{\Cycle}$).  It follows that each face of
$\Graph$ is either inside or outside $\Cycle$.  If a face $\theta$ is
inside $\Cycle$, then $\Cycle$ \emphi{contains} $\theta$.

\begin{defn}
    For a cycle $\Cycle$, and an $1/2\leq\alpha<1$, $\Cycle$ is an
    \emphi{$\alpha$-cycle separator} of a graph $\Graph$, if the
    number of faces inside (resp. outside) $\Cycle$ is at most
    $\ceil{\alpha\cdot \fN}$, where $\fN = \cardin{\FS}$ is the number
    of faces of $\Graph$.
\end{defn}

For two cycles $\Cycle_1$ and $\Cycle_2$ of $\Graph$, $\Cycle_1$ is
\emphi{inside} $\Cycle_2$, denoted by $\Cycle_1\preceq \Cycle_2$, if
$\inX{\Cycle_1} \subseteq \inX{\Cycle_2}$.  For
$\Cycle_1\preceq \Cycle_2$, a face is \emphi{between} $\Cycle_1$ and
$\Cycle_2$, if it is inside $\Cycle_2$ and outside $\Cycle_1$.

Let $\gamma$ be a simple path or cycle in $\Graph$.  The
\emphi{length} of $\gamma$, denoted by $\lenX{\gamma}$, is the number
of edges of $\gamma$.  If $\gamma$ is a path, and $x,y$ are vertices
on $\gamma$, $\gamma[x,y]$ denotes the subpath of $\gamma$ between $x$
and $y$.  For two internally disjoint paths $\gamma_1$ and $\gamma_2$,
if the last vertex of $\gamma_1$ and the first vertex of $\gamma_2$
are identical, $\gamma_1\concat\gamma_2$ denotes the path or cycle
obtained by their \emphi{concatenation}


\section{The cycle separator theorem}
\seclab{proofs}

Let $\Graph=(\VS,\ES,\FS)$ be a triangulated planar graph embedded on
the plane, and let $n = |V|$, and $\fN = |\FS|$.  In this section, we
describe the linear time algorithm for computing a cycle separator of
$\Graph$.

Our construction is composed of three phases.  First, we find a
possibly long cycle separator $\CS$, by finding a spanning tree
$\Tree$ of $\Graph$, and a balanced edge separator $(uv)^*$ in its
dual tree.  The unique cycle in $\Tree \cup \{uv\}$ is guaranteed to
be a (possibly long) cycle separator (\secref{long:sep}).  This part
of the construction is similar to Lemma 2 of Lipton and Tarjan
\cite{lt-stpg-79}, and we include the details for completeness.  Next,
we build a nested sequence of cycles
$\Cycle_1\preceq \Cycle_2\preceq \ldots\preceq \Cycle_k$
(\secref{layers}).  The specific construction of these cycles, which
is guided by $\CS$, is the main new ingredient in the new algorithm.
Finally, we consider the collection of cycles
$\Cycle_1, \ldots, \Cycle_k$ and $\CS$, and construct a few short
cycles, such that one them is guaranteed to be a balanced separator
(\secref{construction}).

\subsection{A possibly long cycle separator}
\seclab{long:sep}

We start by computing a balanced separator that, unfortunately, can be
too long.  For a \BFS tree $\Tree$, we denote by $\PathTY{\Tree}{u}$
the unique shortest path in $\Tree$ between the root of $\Tree$ and
$u$.

\begin{lemma}[\cite{lt-stpg-79}]
    \lemlab{there:is:sack}%
    Given a triangulated planar graph $\Graph$, one can compute, in
    linear time, a \BFS tree $\Tree$ rooted at a vertex $\Root$, and
    an edge $uv \in \EX{\Graph}$, such that:
    \begin{compactenumA}
        \item the (shortest) paths $p_u =\PathTY{\Tree}{u}$ and
        $p_v =\PathTY{\Tree}{v}$ are edge disjoint,
        
        \item the cycle $S = p_u \cup p_v \cup uv$ is a
        $2/3$-separator for $\Graph$.
    \end{compactenumA}
\end{lemma}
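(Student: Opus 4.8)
The plan is to follow the classical dual-tree argument. First I would compute a \BFS tree $\Tree$ of $\Graph$ rooted at an arbitrary vertex $\Root$; this takes linear time. By the fact recorded in \secref{prelim}, the duals of the non-tree edges $\EX{\Graph}\setminus\ES_\Tree$ form a spanning tree $\Tree^*$ of the dual graph $\Graph^*$, and since $\Graph$ is triangulated, $\Graph^*$ is $3$-regular, so $\Tree^*$ has maximum degree three. The vertices of $\Tree^*$ are exactly the faces of $\Graph$, so $\Tree^*$ has $\fN$ vertices, one per face.

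Next I would extract a balanced edge separator of $\Tree^*$. The combinatorial core is the standard claim that a tree of maximum degree three on $\fN$ vertices has an edge $e^*$ whose removal yields two components, each with at most $\ceil{(2/3)\fN}$ vertices. I would prove this by rooting $\Tree^*$ at a leaf, letting $s(x)$ denote the number of vertices in the subtree of $x$, and taking $e^*$ to be the edge joining $c$ to its parent, where $c$ is a vertex of \emph{minimum} subtree size subject to $s(c)\ge\fN/3$. Such a $c$ exists and is not the root, so the parent side has $\fN-s(c)\le(2/3)\fN$ vertices; and since each child of $c$ has subtree size strictly below $\fN/3$ while $c$ has at most two children, the child side $s(c)$ is at most $\ceil{(2/3)\fN}$ as well. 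Let $uv\in\EX{\Graph}$ be the primal edge with $(uv)^*=e^*$; it is a non-tree edge, so $\Tree\cup\{uv\}$ contains a unique cycle $\CS$. Writing $t$ for the lowest common ancestor of $u$ and $v$ in $\Tree$, we have $\CS=\Tree[t,u]\concat\Tree[t,v]\concat uv$, where $\Tree[t,u]$ is the $t$-to-$u$ tree path. Its two tree arcs diverge at $t$, hence are edge disjoint; they play the role of $p_u$ and $p_v$ in part~(A).

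It remains to show part~(B), that $\CS$ is a $2/3$-separator, and this is the step I expect to be the crux. The goal is to prove that the two components of $\Tree^*-e^*$ are exactly the faces inside $\CS$ and the faces outside $\CS$. Every edge of $\CS$ other than $uv$ is an edge of $\Tree$, so its dual lies in the cotree of $\Tree^*$ and is \emph{not} an edge of $\Tree^*$; the only edge of $\CS$ whose dual lies in $\Tree^*$ is $uv$, with dual $e^*$. Hence for any two faces $f$ and $g$, the unique path between them in $\Tree^*$ crosses the cycle $\CS$ if and only if it uses $e^*$. A dual path joins two faces on the same side of $\CS$ precisely when it crosses $\CS$ an even number of times (a Jordan-curve/parity argument on the embedding), so one component of $\Tree^*-e^*$ consists exactly of the faces in $\inX{\CS}$ and the other exactly of the faces in $\outX{\CS}$. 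The size bound transfers verbatim: each of $\inX{\CS}$ and $\outX{\CS}$ contains at most $\ceil{(2/3)\fN}$ faces, which is exactly the $2/3$-separator condition.

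Finally I would verify linear time: the \BFS, the construction of $\Tree^*$ together with the primal/dual edge correspondence, the computation of all subtree sizes $s(\cdot)$ by one postorder traversal, the selection of $e^*$, and the read-off of $\CS$ are each $O(n)$. The main obstacle is the topological correspondence in part~(B): making precise that deleting $e^*$ partitions the faces exactly into $\inX{\CS}$ and $\outX{\CS}$, that is, the ``crosses $\CS$ iff uses $e^*$'' claim and its parity consequence. By contrast, the bounded-degree tree separator and the running-time accounting are routine.
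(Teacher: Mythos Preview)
Your argument follows essentially the same dual-tree approach as the paper, and your treatment of the tree edge separator and of the face/cycle correspondence is correct (indeed, more detailed than the paper's). There is, however, one genuine gap.

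The lemma promises a \BFS tree $\Tree$ \emph{rooted at $\Root$}, with $p_u=\PathTY{\Tree}{u}$ and $p_v=\PathTY{\Tree}{v}$ the root-to-$u$ and root-to-$v$ paths in that tree, edge disjoint. In your construction, you root the \BFS tree at an arbitrary vertex $\Root$ and then take $t$ to be the lowest common ancestor of $u$ and $v$; the paths you exhibit are $\Tree[t,u]$ and $\Tree[t,v]$. But unless $t=\Root$, the actual paths $\PathTY{\Tree}{u}$ and $\PathTY{\Tree}{v}$ from the root share the $\Root$-to-$t$ segment and are \emph{not} edge disjoint. Simply declaring the output root to be $t$ does not help either: a \BFS tree from an arbitrary $\Root$ is in general not a \BFS tree from $t$, and the downstream algorithm relies on $\rdistX{x}$ being the true graph distance from the root (this is exactly what makes the level-cycle argument go through).

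The fix, which the paper carries out explicitly, is to set $\Root:=t$ and then \emph{recompute} a \BFS tree from this new root, forcing the edges of $p_u$ and $p_v$ to be retained. This is legitimate because $p_u$ and $p_v$, being subpaths of a \BFS tree through an ancestor, are genuine shortest paths in $\Graph$ from $\Root$ to $u$ and $v$, and hence can appear in some \BFS tree rooted at $\Root$. With that additional linear-time step your proof is complete and matches the paper's.
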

\begin{proof}
    Our proof is a slight modification of the one provided by Lipton
    and Tarjan \cite{lt-stpg-79}, and we include it for the sake of
    completeness.  Let $r'\in \VS$ be any vertex, and let
    $\Tree = (\VS_\Tree, \ES_\Tree)$ be a \BFS tree rooted at $r'$.
    Also, let $D = \ES \setminus \ES_\Tree$, and note that the dual
    set of edges $D^*$ is a spanning tree of the dual $\Graph^*$.
    Since $\Graph$ is a triangulation, $D^*$ has maximum degree at
    most three.  Thus, it contains an edge $(uv)^*$ whose removal
    leaves two connected components, $D^*_{in}$ and $D^*_{out}$, each
    with at most $\ceil{(2/3)\fN}$ (dual) vertices, see
    \lemref{tree:sep}, where $\fN = \cardin{\FS}$ is the number of
    faces of $\Graph$. Let $D^*_{out}$ be the connected component that
    contains the dual of the outer face, and let $D^*_{in}$ be the
    other one.
    
    Let $uv$ be the original edge that is dual of $(uv)^*$, and $\CS$
    the unique cycle in $\Tree \cup \{uv\}$.  The sets of faces inside
    and outside $\CS$, correspond to the vertex sets of $D^*_{in}$ and
    $D^*_{out}$, respectively.  Thus, $\CS$ is a $2/3$-cycle
    separator.
    
    Now, let $\Root$ be the lowest common ancestor of $u$ and $v$ in
    $\Tree$.  The cycle $\CS$ is composed of $p_u = \Tree[r,u]$,
    $p_v = \Tree[r,v]$ and the edge $uv$.  Since $\Tree$ is a \BFS
    tree, and $\Root$ is an ancestor of $u$ and $v$, the paths $p_u$
    and $p_v$ are shortest paths in $\Graph$.
    
    To get a \BFS tree rooted at $\Root$, one simply recompute the
    \BFS tree starting from $\Root$, where we include the edges of
    $p_u$ and $p_v$ in the newly computed \BFS tree $\Tree$.
\end{proof}

For the rest of the algorithm, let $\CS$, $\Root$, $uv$, $p_u$ and
$p_v$ be as specified by \lemref{there:is:sack}.  We emphasize that
the graph is unweighted, $p_u$ and $p_v$ are shortest paths, and $u$
and $v$ are neighbors.


\begin{figure}[t]
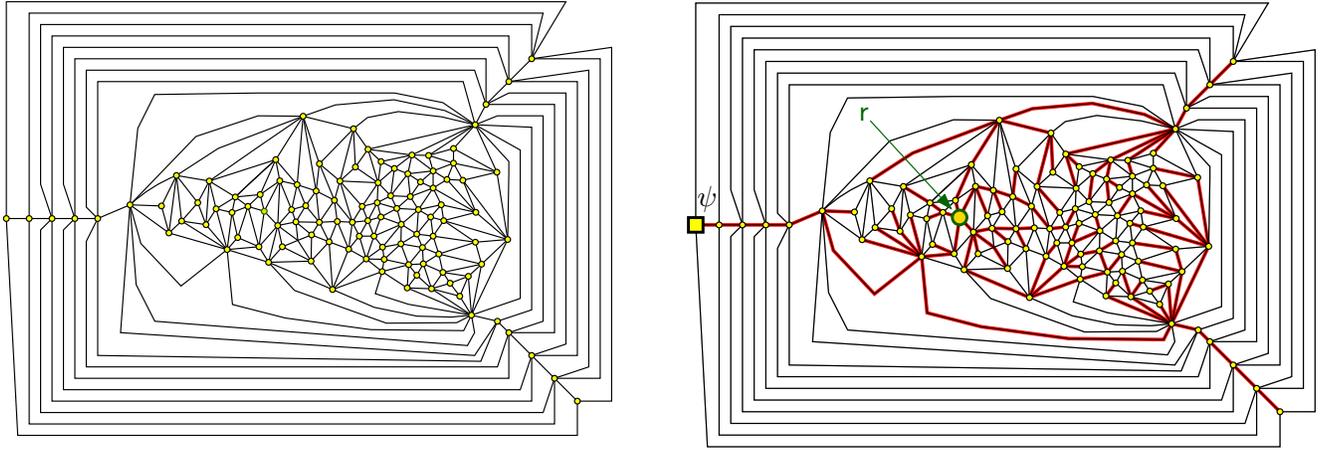

    \centerline{%
       \begin{tabular}{cc}
         \includegraphics[page=1,width=0.45\linewidth]%
         {figs/big_graph}%
         \quad%
         &%
           \quad%
           \includegraphics[page=2,width=0.45\linewidth]%
           {figs/big_graph}%
       \end{tabular}%
    }
    \caption{A graph and its \BFS tree.}
\end{figure} 

\subsection{A nested sequence of short cycles}%
\seclab{layers}%

\begin{figure}[t]
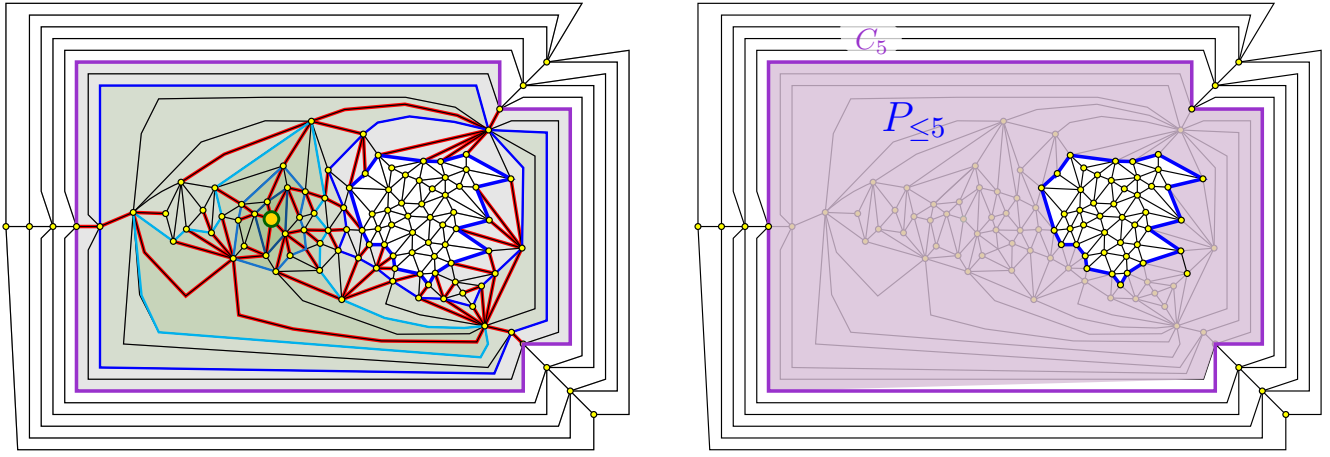

    \centerline{%
       \begin{tabular}{cc}
         \includegraphics[page=9,width=0.45\linewidth]%
         {figs/big_graph}%
         \quad%
         &%
           \quad%
           \includegraphics[page=10,width=0.45\linewidth]%
           {figs/big_graph}%
       \end{tabular}%
    }
    \caption{The region $\PX{5}$ and the associated outer cycle
       $\Cycle_5$.}
\end{figure}

\newcommand{\hT}{\Mh{\mathsf{h}}}%

Let $\Root$ be the root node of the \BFS tree $\Tree$ computed by
\lemref{there:is:sack}.  For $x \in \VX{\Graph}$, let $\rdistX{x}$ be
the distance in $\Tree$ of $x$ from the root $\Root$. The
\emphi{level} of a (triangular) face $\tri = xyz$ of $\Graph$ is
$\rdistX{\tri} = \max\pth{ \rdistX{x}, \rdistX{y}, \rdistX{z}}$. In
particular, a face $\tri = uvz \in \FX{\Graph}$ is \emphi{$i$-close}
to $\Root$ if $\rdistX{\tri} \leq i$. The union of all $i$-close
faces, form a region $\PX{i}$ in the plane\footnote{Here,
   conceptually, we consider the embedding of the edges of $\Graph$ to
   be explicitly known, so that $\PX{i}$ is well defined. The
   algorithm does not need this explicit description.}. This region is
simple, but it is not necessarily simply connected.

Let $\hT = \max \pth{ \rdistX{u}, \rdistX{v}}$, and let
$\Deep \in \brc{u,v}$ be the vertex realizing $\hT$.  We assume, for
the sake of simplicity of exposition, that $\Deep$ is one of the
vertices of the outer face\footnote{This can be ensured by applying
   inversion to the given embedding of $\Graph$ -- but it is not
   necessary for our algorithm.}.

For $i < \hT$, let $\gCycle_i$ be the outer connected component of
$ \partial \PX{i}$. This is a closed curve in the plane, with $\Deep$
being outside it (as long as $i < \hT$), and let $\Cycle_i$ be the
corresponding cycle of edges in $\Graph$ that corresponds to
$\gCycle_i$. The resulting set of cycles is
$\Cycle_0, \ldots, \Cycle_{\hT-1}$ (i.e., a cycle $\Cycle_i$ is empty
if $i \geq \hT$).

\begin{lemma}
    \lemlab{disjoint:B:s}%
    We have the following:
    \begin{compactenumA}
        \item For any $i < \hT$, the vertices of $\Cycle_i$ are all at
        distance $i$ from $\Root$ in $\Tree$.
        
        \item For any $i < \hT$, the cycle $\Cycle_i$ is simple.
        
        \item For any $i < j < \hT$, the cycles $\Cycle_i$ and
        $\Cycle_j$ are vertex disjoint.

        \item \itemlab{D}%
        For $i < \hT$, the cycle $\Cycle_i$ intersects the cycle
        $\CS$.
    \end{compactenumA}
\end{lemma}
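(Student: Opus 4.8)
The plan is to derive the whole lemma from part~(A), which pins down the distances of the vertices on $\Cycle_i$; I would assume throughout that $\Cycle_i \neq \emptyset$ (so $i \geq 1$), the empty case being vacuous. For (A) I would show that every vertex of $\partial \PX{i}$, and hence of $\gCycle_i$, is at distance \emph{exactly} $i$ from $\Root$, via two inequalities. For the upper bound, any boundary edge of $\PX{i}$ bounds an $i$-close face on its inner side, so both of its endpoints have distance at most $i$; since every vertex of $\gCycle_i$ is an endpoint of such an edge, it has distance $\leq i$. For the lower bound, I claim every vertex $x$ with $\rdistX{x} < i$ lies in the \emph{interior} of $\PX{i}$: because $\Tree$ is a \BFS tree, every graph-neighbor of $x$ is at distance at most $\rdistX{x}+1 \leq i$, so all faces incident to $x$ are $i$-close and a small disk around $x$ is contained in $\PX{i}$. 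Thus no boundary vertex can have distance $<i$, and (A) follows.

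Given (A), part (C) is immediate: for $i \neq j$ the vertices of $\Cycle_i$ all have distance $i$ and those of $\Cycle_j$ all have distance $j$, so the two vertex sets are disjoint. For (D), set $p_\Deep := \PathTY{\Tree}{\Deep}$, a shortest path from $\Root$ to $\Deep$ that is a sub-path of $\CS$. Since $\Root$ has distance $0<i$ it lies in the interior of $\PX{i}$, hence inside $\gCycle_i$, while $\Deep$ has distance $\hT > i$ and, by construction, lies outside $\gCycle_i$. Therefore $p_\Deep$ is a curve running from the inside to the outside of the simple closed curve $\Cycle_i$, so it must meet $\Cycle_i$ in a vertex; that shared vertex witnesses $\CS \cap \Cycle_i \neq \emptyset$. (Note this invokes (B), so I would prove (B) before (D).)

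The real work is part (B). The first step is to observe that $\PX{i}$ is \emph{connected}: for any $i$-close face $f$ and any vertex $w$ of $f$, the tree path $\PathTY{\Tree}{w}$ lies in $\PX{i}$, since each of its edges joins a vertex at distance $k \leq i$ to one at distance $k-1 < i$, and the latter is interior (by the argument in (A)), so the two faces flanking that edge are $i$-close. Hence every $i$-close face is joined to $\Root$ within $\PX{i}$, so $\PX{i}$ is connected. It then remains to rule out a \emph{pinch}: a vertex $a$ (necessarily at distance $i$) at which the unbounded complementary region $U_i$ of $\Re^2 \setminus \PX{i}$ touches $a$ along two distinct gaps, forcing $\gCycle_i = \partial U_i$ through $a$ twice. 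Around $a$ the incident faces split into maximal arcs of $i$-close faces (wedges of $\PX{i}$) alternating with arcs of non-$i$-close faces (gaps); a pinch on the outer boundary means two of these gaps belong to $U_i$, separated on both sides by wedges.

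To kill this case I would pass to the sphere and excise a tiny disk around $a$: on the bounding circle the relevant four arcs appear in interleaving cyclic order $\text{(gap, wedge, gap, wedge)}$. Using that $U_i$ is connected and that $\PX{i}$ is connected (just established), the two gap-arcs are joined by an arc contained in $U_i$ and the two wedge-arcs by an arc contained in $\PX{i}$, both lying in the complementary disk; since their endpoint pairs interleave on the circle, these two arcs must cross, contradicting $U_i \cap \PX{i} = \emptyset$. Hence no pinch exists and $\Cycle_i$ is simple. The main obstacle is precisely this last step: formalizing that a pinch on $\partial U_i$ yields an interleaving pair of \emph{disjoint} arcs. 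The delicate points are (i) confirming that the two wedges at $a$ are genuinely connected through the bulk of $\PX{i}$ \emph{away from} the excised disk, and (ii) making the Jordan-curve crossing argument rigorous on the sphere; I expect the bulk of the effort to go here, whereas (A), (C), and (D) should be routine once (B) is in place.
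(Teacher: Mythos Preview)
Your proofs of (A), (C), and (D) coincide with the paper's. The difference is in (B), where the paper bypasses your topological crossing argument entirely. It argues that a vertex $x$ repeated on $\gCycle_i$ is a \emph{cut vertex} of $\PX{i}$; then, since every vertex $w$ of $\PX{i}$ has its \BFS-tree path $\PathTY{\Tree}{w}$ contained in $\PX{i}$ (precisely the fact you used to show $\PX{i}$ is connected), the tree path from $\Root$ to any vertex on the far side of the cut must pass through $x$, giving $\rdistX{x}<i$ and contradicting (A). That is the entire proof of (B) in the paper: no excised disks, no interleaving arcs.

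Your route also works, and the ``delicate point (i)'' you flag is not a genuine obstacle --- you already hold the key. Since $\rdistX{a}=i$ by (A), no \BFS-tree path from $\Root$ to a vertex of $\PX{i}$ can pass through $a$ (otherwise $a$ would be a proper ancestor of a vertex of depth at most $i$, forcing $\rdistX{a}<i$). Hence each wedge at $a$ is joined to $\Root$ inside $\PX{i}\setminus\{a\}$, so the two wedge-arcs \emph{are} connected away from the excised disk, and your crossing contradiction goes through. The two arguments are contrapositives of one another: the paper uses ``cut vertex $\Rightarrow$ distance ${<}\,i$'' directly, while you need ``distance ${=}\,i$ $\Rightarrow$ not a cut vertex'' before invoking Jordan. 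The paper's version is shorter because it stops at the \BFS contradiction rather than setting up the crossing.
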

\begin{proof}
    (A) Consider a vertex $x$ in $\Graph$ with $\rdistX{x} <i$. As
    $\Tree$ is a \BFS tree, we have that all the neighbors $y$ of $x$
    in $\Graph$, have $\rdistX{y} \leq \rdistX{x} + 1 \leq i$. Namely,
    all the triangles adjacent to $x$ are $i$-close, and the vertex
    $x$ is internal to the region $\PX{i}$, which implies that it can
    not appear in $\Cycle_i$.
    
    \ParaWFig{%
       (B) Since $\gCycle_i$ is the (closure) of the outer boundary of
       a connected set, the corresponding cycle of edges $\Cycle_i$ is
       a cycle in the graph. The bad case here is that a vertex $x$ is
       repeated in $\Cycle_i$ more than once.  But then, $x$ is a cut
       vertex for $\PX{i}$ -- removing it disconnects $\PX{i}$ -- see
       \figref{butterfly}. Now, $\rdistX{x} < i$ as the \BFS from
       $\Root$ must have passed through $x$ from one side of $\PX{i}$
       to the other side. Arguing as in (A), implies that $x$ is
       internal to $\PX{i}$, which is a contradiction.%
    }%
    {~\includegraphics{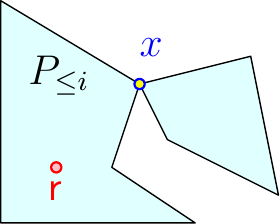}}%
    {\captionof{figure}{} %
       \figlab{butterfly}}

    (C) is readily implied by (A).

    \medskip%
    (D) Indeed, $\Cycle_i$ must intersect the shortest path $p_\Deep$
    from $\Root$ to $\Deep$, and as this path is part of $\CS$, the
    claim follows.
\end{proof}

Computing the cycles $\Cycle_i$, for all $i$, can be done in linear
time (without the explicit embedding of the edges of $\Graph$). To
this end, compute for all the (triangular) faces of $\Graph$ their
level. Next, mark all the edges between faces of level $i$ and $i+1$
as boundary edges forming $\partial \PX{i}$ -- this yields a
collection of cycles. To identify the right cycle, consider the
shortest $p_{\Deep}$ path between $\Root$ and $\Deep$. The cycle with
a vertex that belongs to $p_{\Deep}$ is the desired cycle $\Cycle_i$.
Clearly, this can be done in linear time overall for all these cycles.

\begin{lemma}
    \lemlab{nested:seq}%
    Let $\Delta > 0$ be an arbitrary parameter.  If
    $\hT = \rdistX{\Deep} > \Delta$, then there exist an integer
    $i_0 \in \IntRange{\Delta}$, such that
    $\cardin{\Cycle^{}_{i_0}} > 0$ and
    $\sum_{j\geq 0} \cardin{\Cycle^{}_{i_0 + j\Delta}} \leq n/\Delta$,
    where $\cardin{\Cycle^{}_{k}}$ denotes the number of vertices of
    $\Cycle_k$.
\end{lemma}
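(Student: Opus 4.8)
The plan is a simple averaging argument over residue classes modulo $\Delta$, exploiting that the cycles $\Cycle_0,\ldots,\Cycle_{\hT-1}$ are pairwise vertex disjoint. The first step is to record the global bound $\sum_{i\geq 0}\cardin{\Cycle_i}\leq n$: by part (A) of \lemref{disjoint:B:s} every vertex of $\Cycle_i$ lies at tree distance exactly $i$ from $\Root$, so each vertex of $\Graph$ can belong to at most one cycle $\Cycle_i$, and $\Graph$ has only $n$ vertices in total.

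Next I would bucket the indices by residue. For each $i_0\in\IntRange{\Delta}$ set $\Sigma_{i_0}=\sum_{j\geq 0}\cardin{\Cycle_{i_0+j\Delta}}$. As $i_0$ ranges over the $\Delta$ values of $\IntRange{\Delta}$, the index sets $\brc{i_0+j\Delta \ : \ j\geq 0}$ partition the positive integers, so $\sum_{i_0\in\IntRange{\Delta}}\Sigma_{i_0}=\sum_{i\geq 1}\cardin{\Cycle_i}\leq n$. A sum of $\Delta$ nonnegative numbers that is at most $n$ must contain a term of size at most $n/\Delta$; I take $i_0\in\IntRange{\Delta}$ to be an index achieving such a small $\Sigma_{i_0}$, which immediately yields $\sum_{j\geq 0}\cardin{\Cycle_{i_0+j\Delta}}\leq n/\Delta$.

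Finally I would verify the nonemptiness requirement $\cardin{\Cycle_{i_0}}>0$, and this is exactly where the hypothesis $\hT>\Delta$ is used: since $i_0\leq\Delta<\hT$, part (D) of \lemref{disjoint:B:s} guarantees that $\Cycle_{i_0}$ meets $\CS$ (via the path $p_\Deep$ from $\Root$ to $\Deep$), and hence $\Cycle_{i_0}$ is nonempty. I do not expect a genuine obstacle in this proof; the only points that need care are the bookkeeping that the residue classes really partition the index set (so the bucket sums total at most $n$), and the observation that \emph{every} index in $\IntRange{\Delta}$ is strictly below $\hT$ — this last fact is what lets the nonemptiness conclusion apply to whichever residue the averaging happens to select, rather than forcing a separate argument after the minimizing class is fixed.
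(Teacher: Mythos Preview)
Your proposal is correct and follows essentially the same approach as the paper: define the bucket sums $g(i)=\sum_{j\ge 0}\cardin{\Cycle_{i+j\Delta}}$, use vertex-disjointness of the $\Cycle_i$ (via \lemref{disjoint:B:s}) to bound the total by $n$, apply averaging to find a bucket of size at most $n/\Delta$, and invoke part~(D) together with $i_0\le\Delta<\hT$ for nonemptiness. The only cosmetic difference is that you cite part~(A) to argue disjointness whereas the paper appeals to the cycles being disjoint directly (which is part~(C), itself an immediate consequence of~(A)).
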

\begin{proof}
    Setting $g(i) = \sum_{j\geq 0} \cardin{\Cycle^{}_{i + j\Delta}}$.
    By \lemref{disjoint:B:s} \itemref{D}, $g(i) > 0$, for
    $i =0,\ldots, \Delta-1$.  We have
    \begin{align*}
      \sum_{i=0}^{\Delta-1} g(i) %
      \leq%
      \sum_{i=0}^{\Delta-1}\sum_{j\geq 0} \cardin{\Cycle^{}_{i +
      j\Delta}} =%
      \sum_{k\geq0}^{\hT-1} \cardin{\Cycle^{}_{k}} \leq%
      |\VX{\Graph}| %
      \leq n,
    \end{align*}
    as the cycles $\Cycle_0, \Cycle_1, \ldots, \Cycle_{\hT-1}$ are
    disjoint.  As such, there must be an index $i = i_0$ of the first
    summation that does not exceed the average.
\end{proof}


\subsection{Constructing the cycle separator}
\seclab{construction}

\subsubsection{The algorithm}

Let $\Delta =\Theta(\sqrt{n})$ be a parameter to be specified shortly.
Let $\CS$ be a $2/3$-cycle separator, and $\Root$, $u$, $v$, $p_u$,
and $p_v$ as specified by \lemref{there:is:sack}.  If
$\cardin{\CS} \leq 2\Delta$ then this is the desired a short cycle
separator. So, assume that
\begin{math}
    \hT \geq |{\CS}|/2 > \Delta.
\end{math}

For $j\geq 0$, let $\alpha_j = i_0 + (j-1)\Delta$ be the index of the
$j$\th cycle in the small ``ladder'' of \lemref{nested:seq}.  Since
$\hT > \Delta$ and by \lemref{disjoint:B:s} \itemref{D}, the cycles
$\Cycle _{i_0} = \Cycle_{\alpha_0}$ of the ladder intersects $\CS$.
In particular, let $\CycleA_j = \Cycle_{\alpha_j}$, for
$j=1, \ldots, k-1$, be the $j$\th nested cycles of this light ladder
that intersects $\CS$. Specifically, let $k$ the minimum value such
that $\alpha_k \geq \hT$.  Let $\CycleA_0$ be the trivial cycle formed
by the root vertex $\Root$. Similarly, let $\CycleA_k$ be the trivial
cycle formed only by the vertex $\Deep$, such that its interior
contains the whole graph.

For $j=0,\ldots, k$, let $\fN_j$ be the number of faces in the
interior of $\CycleA_j$. If for some $j$, we have that
$\floor{\fN/3} \leq \fN_j \leq \ceil{(2/3)\fN}$, then $\CycleA_j$ is
the desired separator, as its length is at most $n/\Delta$ by
\lemref{disjoint:B:s}, where $\fN$ is the number of faces of $\Graph$.

Otherwise, there must be an index $i$, such that $\fN_i < \fN/3$, and
$\fN_{i+1} > (2/3)\fN$. Assume, for the sake of simplicity of
exposition that $0 < i < k-1$ (the cases that $i=0$ or $i=k-1$ are
degenerate and can be handled in a similar fashion to what follows).

\begin{figure}[h]
    \centerline{%
       \includegraphics[page=1]{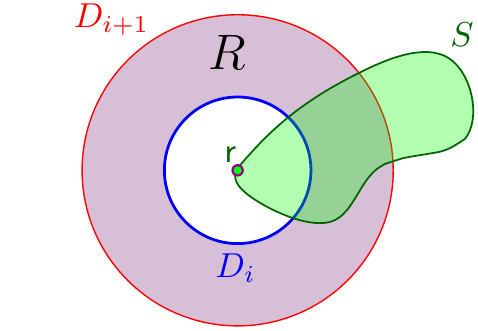}\qquad
       \includegraphics[page=2]{figs/heavy_ring}%
    }
    \caption{}
    \figlab{ring}%
\end{figure}

Consider the ``heavy'' ring $R$ bounded by the two of the nested
cycles $\CycleA_{i+1}$ and $\CycleA_{i}$, see \figref{ring}.


\begin{observation}
    By \lemref{disjoint:B:s}, the cycles $\CycleA_i$ and
    $\CycleA_{i+1}$ each intersects $\CS$ in two vertices exactly. And
    $\CycleA_i$ is nested inside $\CycleA_{i+1}$.
\end{observation}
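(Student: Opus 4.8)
The plan is to verify the three assertions separately: that each of $\CycleA_i = \Cycle_{\alpha_i}$ and $\CycleA_{i+1}=\Cycle_{\alpha_{i+1}}$ meets $\CS$ in exactly two vertices, and that $\CycleA_i \preceq \CycleA_{i+1}$. The single fact I would foreground at the outset is that $u$ and $v$ are \emph{adjacent}, so in the \BFS tree $\Tree$ their root-distances differ by at most one; since $\hT=\max\pth{\rdistX{u},\rdistX{v}}$, this forces $\rdistX{u},\rdistX{v}\in\brc{\hT-1,\hT}$. Because $0<i<k-1$, the two indices satisfy $1\le \alpha_i<\alpha_{i+1}\le \alpha_{k-1}<\hT$, and in fact $\alpha_i<\hT-1$ strictly while $\alpha_{i+1}\le \hT-1$. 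Consequently each of $p_u$ and $p_v$, being a shortest (hence distance-monotone) path from $\Root$, passes through every level up to $\hT-1$, so for $\alpha\in\brc{\alpha_i,\alpha_{i+1}}$ it contains exactly one vertex at distance $\alpha$; call these $w_u$ and $w_v$, and note $w_u\neq w_v$ since $p_u,p_v$ share only $\Root$ and $\alpha\ge 1$.

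For the intersection count, fix $\alpha\in\brc{\alpha_i,\alpha_{i+1}}$ and write $\Cycle=\Cycle_\alpha$. By \lemref{disjoint:B:s}(A) every vertex of $\Cycle$ lies at distance $\alpha$, so any vertex common to $\CS$ and $\Cycle$ must be $w_u$ or $w_v$; this already bounds the intersection by two. To show both actually lie on $\Cycle$, I would run a Jordan-curve argument using that $\Cycle$ is simple (\lemref{disjoint:B:s}(B)): the root is interior to $\PX{\alpha}$, hence inside $\Cycle$, whereas $\Deep$ is outside $\Cycle$ since $\alpha<\hT$. As the graph is embedded, the single edge $uv$ cannot cross $\Cycle$ except at a shared vertex, and generically neither $u$ nor $v$ is at distance $\alpha$, so $u$ and $v$ lie on the same side as $\Deep$, namely outside. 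Traversing $p_u$ (resp.\ $p_v$) from the interior point $\Root$ to the exterior point $u$ (resp.\ $v$) then forces an odd number of crossings of $\Cycle$, and the only candidate crossing vertex on that path is $w_u$ (resp.\ $w_v$); hence both lie on $\Cycle$, giving exactly two intersection vertices. Note \lemref{disjoint:B:s}\itemref{D} already supplies $w_u$ through $p_\Deep$, so the new content is really the companion vertex $w_v$.

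For the nesting $\CycleA_i\preceq\CycleA_{i+1}$, I would again follow $p_u$ outward from $\Root$: it meets $\Cycle_{\alpha_i}$ at its level-$\alpha_i$ vertex and $\Cycle_{\alpha_{i+1}}$ at its level-$\alpha_{i+1}$ vertex, and since distances increase monotonically along a shortest path and $\alpha_i<\alpha_{i+1}$, the former crossing strictly precedes the latter. Thus some vertex of $\Cycle_{\alpha_i}$ sits inside $\Cycle_{\alpha_{i+1}}$; as $\Cycle_{\alpha_i}$ is connected and, by \lemref{disjoint:B:s}(C), vertex-disjoint from $\Cycle_{\alpha_{i+1}}$, the entire curve $\Cycle_{\alpha_i}$ lies inside $\Cycle_{\alpha_{i+1}}$, i.e.\ $\inX{\Cycle_{\alpha_i}}\subseteq\inX{\Cycle_{\alpha_{i+1}}}$. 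Equivalently, $\PX{\alpha_i}\subseteq\PX{\alpha_{i+1}}$ and the outer boundary of the larger region encloses that of the smaller.

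The hard part will be the degenerate case of the two-vertex count when $\rdistX{v}=\alpha_{i+1}$, which can happen only for the outer cycle, precisely when $\rdistX{v}=\hT-1=\alpha_{i+1}$. Here $v$ \emph{is} the candidate vertex $w_v$ and sits on $\Cycle_{\alpha_{i+1}}$ rather than strictly outside it, so the clean ``inside-to-outside'' crossing argument needs a small separate check that $v$ indeed lies on the \emph{outer} boundary component $\Cycle_{\alpha_{i+1}}$; it does, because the triangle carrying the edge $uv$ has level $\hT>\alpha_{i+1}$ and therefore borders $v$ with the exterior of $\PX{\alpha_{i+1}}$. A secondary point to pin down is that the crossing vertex produced on $p_v$ always lands on the outer curve $\Cycle_\alpha$ and not on a boundary component of some hole of $\PX{\alpha}$ (recall the region need not be simply connected); this is exactly where I would use that $\Root$ is inside and $u,v,\Deep$ are all in the unbounded exterior, so the crossings are crossings of the outer boundary.
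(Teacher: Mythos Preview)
Your argument is correct and follows the same route the paper intends: the observation is stated without proof beyond the pointer to \lemref{disjoint:B:s}, and you have simply unpacked that reference, using part~(A) to bound the intersection by two, a Jordan-curve crossing argument along $p_u$ and $p_v$ to force both candidate vertices onto the outer boundary, and part~(C) for the nesting. Your careful handling of the boundary case $\rdistX{v}=\alpha_{i+1}=\hT-1$ and of the possibility that $\PX{\alpha}$ has holes goes beyond what the paper spells out, but is exactly the kind of detail the observation is implicitly relying on.
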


\medskip%
\ParaWFig{%
   
   Let $I_i$ and $O_i$ the portions of $\CycleA_i$ inside and outside
   $\CS$, respectively (define $I_{i+1}$ and $O_{i+1}$ similarly).
   Let $p_i$ and $q_i$ (resp., $p_{i+1}$ and $q_{i+1}$) be the
   endpoints of $I_i$ (resp., $I_{i+1}$), such that $p_i$ is adjacent
   to $p_{i+1}$ along $\CS$.
   
   We can now partition $R$ into two cycles $R_1$ and $R_2$. The
   region $R_1$ is bounded by the cycle formed by
   $\CycleB_1 = \CS[q_i, q_{i+1}] \concat I_{i+1} \concat \CS[p_{i+1},
   p_i] \concat I_i$.  The region $R_2$ is bounded by the cycle formed
   by
   \begin{math}
       \CycleB_2 = \CS[q_i, q_{i+1}] \concat O_{i+1} \concat
       \CS[p_{i+1}, p_i] \concat O_i,
   \end{math}
   see \figref{R:1:R:2}. %
   
   We have that
   \begin{math}
       \cardin{\CycleB_1}%
       \leq%
       \cardin{\CycleA_i} + \cardin{\CycleA_{i+1}} + 2\Delta%
       \leq%
       n/\Delta + 2\Delta,
   \end{math}
   by \lemref{nested:seq}.  In particular, if $\fN(R_1) \geq \fN/3$,
   then $\CycleB_1$ is the desired cycle separator, since
   $\fN(R_1) \leq \fN(\CS) \leq \ceil{(2/3)\fN}$.
   
}%
{\includegraphics[page=3]{figs/heavy_ring}}%
{\captionof{figure}{}%
   \figlab{R:1:R:2}%
}

Similarly, if $\fN(R_2) \geq \fN/3$, then $\CycleB_2$ is the desired
cycle separator, since
\begin{math}
    \fN(R_2)%
    \leq%
    \fN - \fN(\CS)%
    \leq%
    \ceil{(2/3)\fN}.
\end{math}

Otherwise, the algorithm returns the cycle $\CycleC$ formed by
\begin{math}
    O_i \concat%
    \CS[q_i, q_{i+1}]%
    \concat%
    I_{i+1}
    \concat%
    \CS[p_{i+1}, p_i]
\end{math}
as the desired separator.

\begin{figure}[h]
    \centerline{%
       \begin{tabular}{cc}
         \includegraphics[page=5]{figs/heavy_ring}%
         &
           \includegraphics[page=4]{figs/heavy_ring}%
       \end{tabular}
    }
    \caption{}
    \figlab{ring:2}%
\end{figure}

\subsubsection{Analysis}

\begin{lemma}
    \lemlab{w:case}%
    Assume that $\fN(R_1) < \fN/3$ and $\fN(R_2) < \fN/3$. Consider
    the region $Z$, formed by the union of the interior of
    $\CycleA_i$, together with the interior of $R_1$. Its boundary, is
    the cycle $\CycleC$ formed by
    \begin{math}
        O_i \concat%
        \CS[q_i, q_{i+1}]%
        \concat%
        I_{i+1}
        \concat%
        \CS[p_{i+1}, p_i],
    \end{math}
    see \figref{ring:2}. The cycle $\CycleC$ is a $2/3$-cycle
    separator with $n/\Delta + 2\Delta$ edges.
\end{lemma}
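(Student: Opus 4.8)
The plan is to show the $2/3$-separator property of $\CycleC=\partial Z$ by reducing it to two face counts---one bounding the faces \emph{inside} $Z$ and one bounding the faces \emph{outside}---and then to bound $\cardin{\CycleC}$ by splitting it into its four arcs. First I would verify the topological claim that $Z$, obtained by gluing $\inX{\CycleA_i}$ to the interior of $R_1$ along their shared boundary arc $I_i$, is a disk whose boundary is exactly $\CycleC = O_i \concat \CS[q_i,q_{i+1}] \concat I_{i+1} \concat \CS[p_{i+1},p_i]$. This follows because $\CycleA_i = I_i\concat O_i$ and $\partial R_1 = \CS[q_i,q_{i+1}]\concat I_{i+1}\concat \CS[p_{i+1},p_i]\concat I_i$, so gluing along $I_i$ cancels that arc and leaves precisely the four arcs of $\CycleC$; by the Observation the only shared vertices are $p_i,q_i,p_{i+1},q_{i+1}$, so $\CycleC$ is a simple cycle. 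In particular the faces inside $\CycleC$ are exactly those inside $\CycleA_i$ together with those inside $R_1$, giving $\fN(Z)=\fN_i+\fN(R_1)$.

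Next I would record the additive identity $\fN_{i+1}=\fN_i+\fN(R_1)+\fN(R_2)$: the faces between $\CycleA_i$ and $\CycleA_{i+1}$ are precisely those inside $\CycleA_{i+1}$ and outside $\CycleA_i$, and $R_1,R_2$ partition this ring. The two required bounds are then immediate arithmetic. For the inside bound, the hypotheses $\fN_i<\fN/3$ (from the choice of $i$) and $\fN(R_1)<\fN/3$ (from the lemma's assumption) give $\fN(Z)=\fN_i+\fN(R_1)<2\fN/3\le\ceil{(2/3)\fN}$. For the outside bound I would rewrite $\fN(Z)=\fN_{i+1}-\fN(R_2)$ and use $\fN_{i+1}>(2/3)\fN$ together with $\fN(R_2)<\fN/3$ to get $\fN(Z)>\fN/3$, so that the number of faces outside $\CycleC$ is $\fN-\fN(Z)<2\fN/3\le\ceil{(2/3)\fN}$. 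Hence $\CycleC$ is a $2/3$-cycle separator.

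For the length I would split $\CycleC$ into its four arcs. Since $O_i$ is a subpath of $\CycleA_i$ and $I_{i+1}$ a subpath of $\CycleA_{i+1}$, we have $\cardin{O_i}\le\cardin{\CycleA_i}$ and $\cardin{I_{i+1}}\le\cardin{\CycleA_{i+1}}$. The arcs $\CS[q_i,q_{i+1}]$ and $\CS[p_{i+1},p_i]$ lie on the shortest paths $p_u,p_v$ forming $\CS$; since by \lemref{disjoint:B:s} the cycles $\CycleA_i$ and $\CycleA_{i+1}$ meet $\CS$ at the levels $\alpha_i$ and $\alpha_{i+1}=\alpha_i+\Delta$, each such arc advances monotonically through exactly $\Delta$ levels and so has at most $\Delta$ edges. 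Therefore $\cardin{\CycleC}\le\cardin{\CycleA_i}+\cardin{\CycleA_{i+1}}+2\Delta$, which is at most $n/\Delta+2\Delta$ by \lemref{nested:seq}, using that the ladder cycles are pairwise vertex disjoint.

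The arithmetic is routine; the step I expect to require the most care is the topological bookkeeping in the first paragraph---confirming that $Z$ is genuinely a disk and that the four arcs concatenate into a \emph{simple} cycle with no unexpected coincidences (relying on the Observation that $\CycleA_i$ and $\CycleA_{i+1}$ each meet $\CS$ in exactly two vertices), so that the clean identity $\fN(Z)=\fN_i+\fN(R_1)$ and the partition $\fN_{i+1}=\fN_i+\fN(R_1)+\fN(R_2)$ really do hold. The degenerate endpoint cases ($i=0$ or $i=k-1$), which were set aside earlier, would be handled by the same accounting.
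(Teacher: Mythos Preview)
Your proposal is correct and follows essentially the same route as the paper: the face-count identity $\fN(Z)=\fN_i+\fN(R_1)=\fN_{i+1}-\fN(R_2)$ together with the four standing inequalities gives $\fN/3<\fN(Z)<(2/3)\fN$, and the length bound is obtained by splitting $\CycleC$ into its four arcs and invoking \lemref{nested:seq}. Your write-up is in fact somewhat more explicit than the paper's---you spell out why the two $\CS$-arcs have length at most $\Delta$ (via the level difference $\alpha_{i+1}-\alpha_i=\Delta$ along shortest paths) and you flag the topological bookkeeping needed for $\partial Z=\CycleC$, both of which the paper leaves implicit.
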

\begin{proof}
    We have the following: \smallskip%
    \begin{enumerate*}[label=(\roman*)]
        \item $\fN_i < \fN/3$,
        \item $\fN_i + \fN(R_1) + \fN(R_2) = \fN_{i+1} > (2/3)\fN$,
        \item $\fN(R_1) < \fN/3$, and
        \item $\fN(R_2) < \fN/3$.
    \end{enumerate*}
    \smallskip%
    Assume that $\fN_i + \fN(R_1) < \fN/3$. But then
    $\fN_{i+1} = \fN_i + \fN(R_1) + \fN(R_2) < (2/3)\fN$, which is
    impossible. The region $Z$ bounded by $\CycleC$ contains
    $\fN_i + \fN(R_1)$ faces, and we have
    $\fN/3 < \fN_i + \fN(R_1) < (2/3)\fN$, which implies the separator
    property.
    
    As for the length of $\CycleC$, observe that 
    \begin{math}
        \cardin{\CycleC}%
        \leq%
        \cardin{\CycleA_i} + \cardin{\CycleA_{i+1}} + \cardin{
           \CS[p_i, p_{i+1}]} + \cardin{ \CS[q_i, q_{i+1}]}%
        \leq %
        n/\Delta + 2\Delta,
    \end{math}
    by \lemref{nested:seq}.
\end{proof}

\begin{theorem}
    \thmlab{sep}%
    Given an embedded triangulated planar graph $\Graph$ with $n$
    vertices and $\fN$ faces, one can compute, in linear time, a
    simple cycle $\CycleC$ that is a $2/3$-separator of $\Graph$. The
    cycle $\CycleC$ has at most $O(1) + \sqrt{8n}$ edges.
    
    This cycle $\CycleC$ also $2/3$-separates the vertices of $\Graph$
    -- namely, there are at most ${(2/3)n}$ vertices of $\Graph$ on
    each side of it.
\end{theorem}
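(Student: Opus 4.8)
The plan is to assemble the three-phase construction of \secref{construction} into a single exhaustive case analysis that invokes the lemmas already proved, and then to add the two pieces not yet covered: the optimal choice of $\Delta$ and the passage from a face-separation guarantee to a vertex-separation guarantee. First I would verify that the algorithm always returns a valid $2/3$-cycle separator by checking that its cases are exhaustive. If $\cardin{\CS}\le 2\Delta$, then $\CS$ itself works by \lemref{there:is:sack}. Otherwise $\hT>\Delta$, so the light ladder $\CycleA_0\preceq\cdots\preceq\CycleA_k$ of \lemref{nested:seq} is well defined; since $\fN_0=0$, $\fN_k=\fN$, and the $\fN_j$ are monotone (the cycles are nested), either some $\CycleA_j$ satisfies $\floor{\fN/3}\le\fN_j\le\ceil{(2/3)\fN}$ and is returned, or there is a consecutive straddling pair with $\fN_i<\fN/3$ and $\fN_{i+1}>(2/3)\fN$. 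In the latter case the ring $R$ splits into $R_1,R_2$, and the three subcases ($\fN(R_1)\ge\fN/3$, $\fN(R_2)\ge\fN/3$, or neither) return $\CycleB_1$, $\CycleB_2$, or $\CycleC$; the separator property is exactly \lemref{w:case} in the last subcase and a short direct computation in the first two. Simplicity of each returned cycle follows from parts (B)--(C) of \lemref{disjoint:B:s} together with the observation that $\CycleA_i,\CycleA_{i+1}$ each meet $\CS$ in exactly two vertices, so the concatenated arcs are internally disjoint and close up.

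Second, for the length bound I would observe that across all cases the returned cycle has at most $n/\Delta+2\Delta$ edges (the case $\cardin{\CS}\le 2\Delta$ is even shorter, and the ladder cases are bounded by \lemref{nested:seq} and \lemref{w:case}). Minimizing $n/\Delta+2\Delta$ over $\Delta$ gives the optimum at $\Delta=\sqrt{n/2}$, where the bound equals $\sqrt{2n}+\sqrt{2n}=\sqrt{8n}$; rounding $\Delta$ to the nearest integer perturbs this by an additive $O(1)$, yielding the claimed $O(1)+\sqrt{8n}$. For the running time, each phase is linear: the \BFS tree and dual edge separator of \lemref{there:is:sack}, the face levels and boundary cycles of \secref{layers}, the search for $i_0$, and the construction of the $O(1)$ candidate cycles; hence the total is linear.

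Third, and this is the genuinely new piece, I would translate the face bound into the vertex bound via Euler's formula. The region strictly inside $\CycleC$ is a triangulated disk whose boundary is $\CycleC$, of length $L=\cardin{\CycleC}$; let $\fN_{\mathrm{in}}$ be its interior faces and $n_{\mathrm{in}}$ its strictly interior vertices. Letting $V',E',F'$ denote the vertices, edges and faces of this disk, the incidence identity $2E'=3\fN_{\mathrm{in}}+L$ together with $V'-E'+F'=2$ gives $n_{\mathrm{in}}=\fN_{\mathrm{in}}/2-L/2+1$. Since $\Graph$ is a triangulation we have $\fN=2n-4$, and the separator property gives $\fN_{\mathrm{in}}\le\ceil{(2/3)\fN}$; because $L\ge 3$ this yields $n_{\mathrm{in}}<\fN_{\mathrm{in}}/2\le\ceil{(2/3)\fN}/2<(2/3)n$. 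The same identity applied to the outside region (or derived from $n_{\mathrm{in}}+n_{\mathrm{out}}+L=n$ and $\fN_{\mathrm{in}}+\fN_{\mathrm{out}}=\fN$) gives $n_{\mathrm{out}}=\fN_{\mathrm{out}}/2-L/2+1<(2/3)n$, which is the desired vertex separation.

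The hard part will be the bookkeeping of the case analysis rather than any single deep step: I must confirm that the straddling index $i$ always exists, that the degenerate cases $i=0$ and $i=k-1$ deferred in \secref{construction} reduce to the generic argument, and that each concatenation of arcs from $\CS$, the $I_\bullet$, and the $O_\bullet$ genuinely forms a simple cycle with the stated interior. By contrast the length optimization and the Euler-formula vertex count are short once the cases are settled, and the latter is where essentially all the content of this theorem beyond the preceding lemmas resides.
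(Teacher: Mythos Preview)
Your proposal is correct and follows essentially the same route as the paper: you set $\Delta=\ceil{\sqrt{n/2}}$ to optimize the bound $n/\Delta+2\Delta$ from \lemref{w:case}, note that the construction and its running time are already established in \secref{construction}, and then convert the face bound to a vertex bound via Euler's formula. The only cosmetic difference is in the vertex-count step: the paper packages it as \lemref{good:v:sep} and derives $n_{\mathrm{in}}=(\fN(\Cycle)-\cardin{\Cycle})/2+1$ by adding an apex vertex outside $\Cycle$ and applying part~(A) to the resulting triangulation, whereas you obtain the identical identity directly from $2E'=3\fN_{\mathrm{in}}+L$ and $V'-E'+F'=2$ on the disk; the two derivations are interchangeable.
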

\begin{proof}
    The construction is described above. As for the length of
    $\CycleC$, set $\Delta = \ceil{\bigl.\smash{\sqrt{n/2}}\,}$.  By
    \lemref{w:case}, we have
    \begin{math}
        \cardin{\CycleC}%
        \leq%
        2\Delta + n /\Delta%
        \leq%
        O(1) + \sqrt{2n} + {\sqrt{2n}}%
        \leq%
        O(1) + {\sqrt{8n}}.
    \end{math}
    (The separator cycle is even shorter if one of the other cases
    described above happens.)
    
    As for the running time, observe that the algorithm runs \BFS on
    the graph several times, identify the edges that form the relevant
    cycles. Count the number of faces inside these cycles, and finally
    counts the number of edges in $R_1$ and $R_2$. Clearly, all this
    work (with a careful implementation) can be done in linear time.
    
    The second claim follows from a standard argument, see
    \lemref{good:v:sep} \itemref{part:c} below for details.
\end{proof}

\subsection{From faces separation to vertices separation}

\begin{lemma}
    \lemlab{good:v:sep}%
    %
    \begin{enumerate}[label=(\Alph*),parsep=-0.5ex]
        \item A simple planar graph $\Graph$ with $n$ vertices has at
        most $3n-6$ edges and at most $2n - 4$ faces. A triangulation
        has exactly $3n-6$ edges and $2n-4$ faces.
        
        \item Let $\Graph$ be a triangulated planar graph and let
        $\Cycle$ be a simple cycle in it.  Then, there are exactly
        $ (\fN(\Cycle)-\cardin{\Cycle})/2 +1$ vertices in the interior
        of $\Cycle$, where $\fN(\Cycle)$ denotes the number of faces
        of $\Graph$ in the interior of $\Cycle$.
        
        \item \itemlab{part:c} A simple cycle $\Cycle$ in a
        triangulated graph $\Graph$ that has at most $\ceil{(2/3)\fN}$
        faces in its interior, contains at most $(2/3)n$ vertices in
        its interior, where $n$ and $\fN$ are the number of vertices
        and faces of $\Graph$, respectively.
    \end{enumerate}
\end{lemma}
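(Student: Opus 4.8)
The common engine for all three parts is Euler's formula $\cardin{\VS} - \cardin{\ES} + \cardin{\FS} = 2$ for a connected plane graph, together with double counting of edge--face incidences; the statements are short consequences, and essentially all that is required is careful bookkeeping. For part (A), I would apply Euler's formula to $\Graph$ (which I may assume connected, since disconnecting only weakens the edge bound). Because $\Graph$ is simple and $n \geq 3$, every face is bounded by at least three edges and every edge borders at most two faces, so summing face degrees gives $3\cardin{\FS} \leq 2\cardin{\ES}$. Substituting $\cardin{\ES} = \cardin{\VS} + \cardin{\FS} - 2$ yields $\cardin{\ES} \leq 3n - 6$ and then $\cardin{\FS} \leq 2n - 4$. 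For a triangulation every face, including the outer one, is a triangle, so $3\cardin{\FS} \leq 2\cardin{\ES}$ becomes the equality $3\cardin{\FS} = 2\cardin{\ES}$, and both bounds become exact.

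For part (B), the plan is to apply Euler's formula to the plane graph $H$ consisting of $\Cycle$ together with everything in its interior. Since $\Cycle$ is simple, its closed interior is a triangulated topological disk, so $H$ is connected, its bounded faces are exactly the $\fN(\Cycle)$ interior triangles, and it has a single unbounded face bounded precisely by $\Cycle$. Writing $v_{\mathrm{in}}$ for the number of strictly interior vertices, we have $\cardin{\VX{H}} = v_{\mathrm{in}} + \cardin{\Cycle}$ and $\cardin{\FX{H}} = \fN(\Cycle) + 1$. The key step is to double count edge--face incidences: each interior triangle contributes $3$ and the outer face contributes $\cardin{\Cycle}$, while each edge is counted exactly twice, so $2\cardin{\EX{H}} = 3\fN(\Cycle) + \cardin{\Cycle}$. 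Plugging these three quantities into $\cardin{\VX{H}} - \cardin{\EX{H}} + \cardin{\FX{H}} = 2$ and solving for $v_{\mathrm{in}}$ produces exactly $v_{\mathrm{in}} = (\fN(\Cycle) - \cardin{\Cycle})/2 + 1$.

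For part (C), I would simply combine the two previous parts. By (B) the number of interior vertices equals $(\fN(\Cycle) - \cardin{\Cycle})/2 + 1$, and since $\Cycle$ is a simple cycle we have $\cardin{\Cycle} \geq 3$, so this is at most $\fN(\Cycle)/2 - 1/2$. Invoking the hypothesis $\fN(\Cycle) \leq \ceil{(2/3)\fN}$ and the exact count $\fN = 2n - 4$ from (A), a short estimate — in which the ceiling inflates $(2/3)\fN = (4n-8)/3$ by at most $2/3$ — bounds the interior vertex count by $(4n-9)/6 < (2/3)n$, which gives the claim with a little room to spare.

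The only genuine obstacle lives in part (B): justifying that Euler's formula applies cleanly to the substructure $H$ and that the outer-face degree is \emph{exactly} $\cardin{\Cycle}$. This rests on the topological fact that a simple cycle in a triangulation bounds a triangulated disk, so that $H$ is a connected plane graph with no bridges, every edge bordering two distinct faces, and whose unique unbounded face is traversed exactly once along each cycle edge (no repeated boundary vertices or chords on the outer boundary). Once this is pinned down, the double counting is unambiguous, and parts (A) and (C) reduce to routine degree counting and ceiling arithmetic.
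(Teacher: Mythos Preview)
Your argument is correct in all three parts. Parts (A) and (C) match the paper's proof essentially verbatim (the paper also invokes Euler's formula for (A) and then chains (B) with $\fN = 2n-4$ for (C), only keeping $\cardin{\Cycle}$ explicit a step longer rather than using $\cardin{\Cycle}\ge 3$).

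The one genuine difference is in (B). You apply Euler's formula directly to the disk $H$ and compute $\cardin{\EX{H}}$ by double counting edge--face incidences, using that the unique outer face has degree $\cardin{\Cycle}$. The paper instead avoids counting edges at all: it cones off the outer boundary by adding a new vertex adjacent to every vertex of $\Cycle$, producing a full triangulation on $n'+1$ vertices, and then invokes part~(A) to get $2(n'+1)-4$ triangular faces, of which $\cardin{\Cycle}$ are the new cones. Subtracting and rearranging yields the same formula. Your route is the more direct handshaking computation; the paper's route is a clean reduction to (A) that sidesteps the need to justify the outer-face degree separately. Both are short and standard.
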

\begin{proof}
    (A) is an immediate consequence of Euler's formula.

    (B) Let $n'$ be the number of vertices of $\Graph$ in or on
    $\Cycle$ -- delete the portion of $\Graph$ outside $\Cycle$, and
    add a vertex $v$ to $\Graph$ outside $\Cycle$, and connect it to
    all the vertices of $\Cycle$.  The resulting graph is a
    triangulation with $n'+1$ vertices, and $2(n'+1)-4=2n'-2$
    triangles, by part (A).  This counts $\cardin{\Cycle}$ triangles
    that were created by the addition of $v$. As such,
    $\fN(\Cycle) = 2n'-2 - \cardin{\Cycle}$ $\implies$
    $n' = \fN(\Cycle)/2+1 + \cardin{\Cycle}/2$. The number of vertices
    inside $C$ is
    $n' - \cardin{\Cycle} = (\fN(\Cycle)-\cardin{\Cycle})/2 +1$.
    
    (C) Part (B) implies that number of vertices inside the region
    formed by the cycle $\Cycle$ is 
    \begin{align*}
      (\fN(\Cycle)-\cardin{\Cycle})/2 +1
      &  \leq%
        (\ceil{(2/3)\fN}-\cardin{\Cycle})/2 +1
        =%
        (\ceil{(2/3)(2n-4)}-\cardin{\Cycle})/2 +1
      \\
      &\leq%
        \frac{(2/3)(2n-4) + 1-\cardin{\Cycle}}{2} +1
        \leq%
        \frac{2}{3} n,
    \end{align*}
    as claimed.
\end{proof}

\BibTexMode{%
 \providecommand{\CNFX}[1]{ {\em{\textrm{(#1)}}}}

}

\BibLatexMode{\printbibliography}

\begin{thebibliography}{MTTV97}

\bibitem[AST94]{ast-ps-94}
\hrefb{http://www.math.tau.ac.il/~nogaa/}{N.~{Alon}}, P.~Seymour, and R.~Thomas.
\newblock Planar separators.
\newblock {\em SIAM J. Discrete Math.}, 2(7):184--193, 1994.

\bibitem[FMPS16]{fmps-sscsp-16}
E.~Fox{-}Epstein, S.~Mozes, P.~M. Phothilimthana, and C.~Sommer.
\newblock Short and simple cycle separators in planar graphs.
\newblock {\em {ACM} Journal of Experimental Algorithmics},
  21(1):2.2:1--2.2:24, 2016.

\bibitem[{Har}13]{h-speps-13}
\hrefb{http://sarielhp.org}{S.~{{Har-Peled}}}.
\newblock A simple proof of the existence of a planar separator.
\newblock {\em ArXiv e-prints}, April 2013.

\bibitem[KM17]{km-oapg-17}
P.N. Klein and S.~Mozes.
\newblock Optimization algorithms for planar graphs.
\newblock \url{http://planarity.org}, 2017.
\newblock Book draft.

\bibitem[KMS13]{kms-srsdp-13}
P.~N. Klein, S.~Mozes, and C.~Sommer.
\newblock Structured recursive separator decompositions for planar graphs in
  linear time.
\newblock In {\em Proc. 45th Annu. ACM Sympos. Theory Comput. {\em(STOC)}},
  pages 505--514, 2013.

\bibitem[Koe36]{k-kdka-36}
P.~Koebe.
\newblock Kontaktprobleme der konformen {Abbildung}.
\newblock {\em Ber. Verh. S{\"a}chs. Akademie der Wissenschaften Leipzig,
  Math.-Phys. Klasse}, 88:141--164, 1936.

\bibitem[LT79]{lt-stpg-79}
R.~J. Lipton and R.~E. Tarjan.
\newblock A separator theorem for planar graphs.
\newblock {\em SIAM J. Appl. Math.}, 36:177--189, 1979.

\bibitem[Mil86]{m-fsscs-86}
G.~L. Miller.
\newblock {Finding small simple cycle separators for 2-connected planar
  graphs}.
\newblock {\em J. Comput. Sys. Sci.}, 32(3):265--279, 1986.

\bibitem[MTTV97]{mttv-sspnng-97}
G.~L. Miller, S.~H. Teng, W.~P. Thurston, and S.~A. Vavasis.
\newblock Separators for sphere-packings and nearest neighbor graphs.
\newblock {\em \hrefb{http://www.acm.org/jacm/}{J. Assoc. Comput. {Mach.}}}, 44(1):1--29, 1997.

\bibitem[Ung51]{u-tpg-51}
P.~Ungar.
\newblock A theorem on planar graphs.
\newblock {\em J. London Math. Soc.}, 26:256--262, 1951.

\end{thebibliography}

\appendix%

\section{Balanced edge separator in %
   a low-degree tree}
\apndlab{tree:sep:proof}

The following lemma is well known, and we provide a proof for the sake
of completeness.
\begin{lemma}
    \lemlab{tree:sep}%
    Let $\Tree$ be a tree with $n$ vertices, with maximum degree
    $d \geq 2$. Then, there exists an edge whose removal break $\Tree$
    into two trees, each with at most $\ceil{ (1-1/d)n}$
    vertices. This edge can be computed in linear time.
\end{lemma}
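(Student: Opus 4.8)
The plan is to root $\Tree$ at a \emph{leaf} and reduce the problem to a greedy descent guided by subtree sizes. First I would pick any leaf $\Root$ as the root and run a single traversal to compute, for every vertex $x$, the number $s(x)$ of vertices in the subtree hanging below $x$; this is standard and takes linear time, with $s(\Root)=n$. I set the threshold $t=\ceil{(1-1/d)n}$ and note $t<n$, because $n\ge d+1$ forces $n/d>1$.

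Next I would trace a root-to-leaf path greedily: starting at $\Root$, repeatedly step from the current vertex to the child whose subtree is largest, and stop at the first vertex $v$ with $s(v)\le t$. Such a $v$ exists since the values $s(\cdot)$ strictly decrease along the path while leaves have subtree size $1\le t$. My claim is that the edge $e$ joining $v$ to its parent $w$ is the required separator: deleting $e$ splits $\Tree$ into the subtree of $v$, of size $s(v)\le t$, and its complement, of size $n-s(v)$, so it remains only to prove $n-s(v)\le t$, equivalently $s(v)\ge n/d$.

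The heart of the argument is this lower bound $s(v)\ge n/d$. Suppose first that $w\neq\Root$. By the stopping rule $w$ is the last vertex on the path above the threshold, so $s(w)>t\ge(1-1/d)n$, and since sizes are integers $s(w)-1\ge(1-1/d)n$. As $w$ has a parent, it has at most $d-1$ children, whose subtree sizes sum to $s(w)-1$; because $v$ is the largest such child,
\[
   s(v)\ \ge\ \frac{s(w)-1}{d-1}\ \ge\ \frac{(1-1/d)n}{d-1}\ =\ \frac{n}{d}.
\]
Hence $n-s(v)\le(1-1/d)n\le t$, and $e$ is a valid separator.

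The single delicate point, and the main obstacle, is the degenerate case $w=\Root$, where the clean division by $d-1$ is unavailable because the root may have up to $d$ children. Rooting at a leaf is exactly what disposes of it: if $w=\Root$ then $v$ is the unique child of the leaf $\Root$, so cutting $e$ yields the two sides of sizes $n-1$ and $1$, and the stopping rule already certifies $s(v)=n-1\le t$; both sides are at most $t$, so $e$ is again valid. (Alternatively, one checks the integer inequality $n-\ceil{(n-1)/d}\le\ceil{(1-1/d)n}$ directly.) Since the traversal, the descent, and extracting $e$ each run in linear time, this establishes both the existence and the linear-time computation asserted by the lemma.
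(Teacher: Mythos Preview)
Your argument is correct and is essentially the paper's proof: root the tree, precompute subtree sizes, descend greedily to the heaviest child until the subtree size drops to the threshold $t=\lceil(1-1/d)n\rceil$, and then bound the heavy child's size below by $n/d$ using that its parent has at most $d-1$ children. Your one twist---rooting at a \emph{leaf} rather than an arbitrary vertex---is a clean way to guarantee that the parent $w$ really does have at most $d-1$ children, so the boundary case (the descent stopping at depth one) falls out immediately without the divisibility case analysis the paper performs at the end.
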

\begin{proof}
    Let $v_1$ be an arbitrary vertex of $\Tree$, and root $\Tree$ at
    $v_1$. For a vertex $v$ of $\Tree$ let $n(v)$ denote the number of
    nodes in its subtree -- this quantity can be precomputed, in
    linear time, for all the vertices in the tree using \DFS.
    
    In the $i$\th step, $v_{i+1}$ be the child of $v_i$ with maximum
    number of vertices in its subtree. If
    $n(v_{i+1})\leq \ceil{ (1-1/d)n}$, then the algorithm outputs the
    edge $x y$ as the desired edge separator, where $x= v_{i}$ and
    $y=v_{i+1}$. Otherwise, the algorithm continues the walk down to
    $v_{i+1}$.  Since the tree is finite, the algorithm stops and
    output an edge.
    
    Assume, for the sake of contradiction, that $n(y) < {n/d}$. But
    then, $x$ has at most $d(x)-1\leq d-1$ children (in the rooted
    tree), each one of them has at most $n(y)$ nodes (since $y$ was
    the ``heaviest'' child). As such, we have
    $n(x) \leq 1 + (d-1)n(y) < 1 + (d-1){n/d} \leq \ceil{ (1-1/d)n}$
    if $d$ does not divides $n$. If $d$ divides $n$ then
    \begin{math}
        n(x)%
        \leq%
        1 + (d-1)n(y)%
        \leq%
        1 + (d-1)(n/d - 1)%
        = ((d-1)/d)n +2 - d \leq%
        \ceil{ (1-1/d)n}.
    \end{math}

    Namely, the algorithm would have stopped at $x$, and not continue
    to $y$, a contradiction.
    
    As such, $n/d \leq n(y) \leq \ceil{(1-1/d) n}$. But this implies
    that $xy$ is the desired edge separator.
\end{proof}

\remove{%
   \newpage
   \begin{align*}
     f(x) =2x + n/x\\
     f'(x) =2 - nx^{-2}\\  
     2x^2= n\\  
     x= \sqrt{n/2}\\  
     f(x) = 2 + 2\sqrt{2n} + 2\sqrt{n}
   \end{align*}
}

\end{document}